\newcommand{\ol}{\overline}
\newcommand{\eps}{\varepsilon}
\newcommand{\emp}{\emptyset}
\newcommand{\Sig}{\Sigma}
\newcommand{\ur}{uniquely reachable}
\newcommand{\bi}{\begin{itemize}}
\newcommand{\ei}{\end{itemize}}
\newcommand{\be}{\begin{enumerate}\rm}
\newcommand{\ee}{\end{enumerate}}
\newcommand{\bd}{\begin{description}}
\newcommand{\ed}{\end{description}}
\newcommand{\bq}{\begin{quote}}
\newcommand{\eq}{\end{quote}}
\newcommand{\txt}[1]{\mbox{ #1 }}
\newcommand{\qedb}{$\diamond$}
\newcommand{\etc}{\mbox{\it etc.}}
\newcommand{\ie}{\mbox{\it i.\,e.}}
\newcommand{\eg}{\mbox{\it e.\,g.}}
\newcommand{\cA}{{\mathcal A}}
\newcommand{\cB}{{\mathcal B}}
\newcommand{\cL}{{\mathcal L}}
\begin{document}
\title{Quotient Complexity of Regular Languages\,%
\thanks{This research was supported by the Natural Sciences and Engineering Research Council of Canada under grant no.\ OGP0000871.}
}
\def\titlerunning{Quotient Complexity of Regular Languages}
\def\authorrunning{J.~Brzozowski}

\author{Janusz Brzozowski
\institute{David R.~Cheriton School of Computer Science\\
 University of Waterloo\\
 Waterloo, ON --
 Canada N2L 3G1}
\email{brzozo@uwaterloo.ca}
}

\maketitle

\begin{abstract}
The past research on the state complexity of operations on regular languages is examined, and
 a new approach based on an old method (derivatives of regular expressions) is presented. Since state complexity is a property of a language, it is appropriate to define it in formal-language terms as the number of distinct quotients of the language, and to call it ``quotient complexity''.
The problem of finding the quotient complexity of a language $f(K,L)$ is considered, where $K$ and $L$ are regular languages and $f$ is a regular operation, for example, union or concatenation.
Since quotients can be represented by derivatives, one can find a formula for the typical quotient of $f(K,L)$ in terms of the quotients of $K$ and $L$. To obtain an upper bound on the number of quotients of $f(K,L)$  all one has to do is count how many such quotients are possible, and this makes automaton constructions unnecessary. The advantages of this point of view are illustrated by many examples. Moreover, new general observations are presented to help in the estimation of the upper bounds on quotient complexity of regular operations.
\end{abstract}

\section{Introduction}
\label{sec:intro}
It is assumed that the reader is familiar with the basic concepts of regular languages and finite automata, as described in many textbooks. 
General background material can be found in Dominique Perrin's~\cite{Per90} (1990) and Sheng Yu's~\cite{Yu97} (1997) handbook articles; the latter has an introduction to state complexity. A~more detailed treatment of state complexity can be found in Sheng Yu's survey~\cite{Yu01}. The present paper concentrates on the complexity of basic operations on regular languages. Other aspects  of complexity of regular languages and finite automata are discussed  in~\cite{BHK08,GrHo08,HoKu09,Jir209,JiOk08,JiPi09,SSY07,SaYu07}; this list is not exhaustive, but it should give the reader a good idea of the scope of the work on this topic.

\section{State complexity or quotient complexity?}
\label{sec:quotient}
The English term \emph{state complexity} of a regular language seems to have been introduced by Birget\footnote{An error in~\cite{Bir91} was corrected in~\cite{YuZh91}.}~\cite{Bir91} in 1991, and is now in common use. It is defined as the number of states in the minimal deterministic finite automaton (DFA) accepting the language~\cite{Yu01}. There had been much earlier studies of  this topic, but the term ``state complexity'' was not used. For example, in 1963 Lupanov~\cite{Lup63} showed that the bound~$2^n$ is tight for the conversion of nondeterministic finite automata (NFA's) to DFA's, and he used the term \emph{slozhnost' avtomatov,} meaning \emph{complexity of automata} representing the same set of words.
The case of languages over a one-letter alphabet was studied in 1964 by Lyubich~\cite{Lyu64}.
Lupanov's result is almost unknown in the English-language literature, and is often attributed to the 1971 paper by Moore~\cite{Moo71}. In~1970, Maslov~\cite{Mas70} studied the complexity of basic operations on regular languages, and  stated without proof some tight bounds for these operations. In the introduction to his paper he states:
\bq
An important characteristic of the complexity of these sets [of words] is the \emph{number of states of the minimal representing automaton}.\footnote{The emphasis is mine.}
\eq
In 1981 Leiss~\cite{Lei81}  referred to \emph{(deterministic) complexity} of languages.
Some additional references to early works related to this topic can be found 
in~\cite{HJS05,Yu01}, for example.

A \emph{language} is a subset of the free monoid $\Sig^*$ generated by a finite alphabet~$\Sig$. If state complexity is a property of a language, then why is it defined in terms of a completely different object, namely an automaton? Admittedly, regular languages and finite automata are  closely related, but there is a more natural way to define this complexity of languages, as is shown below.

The \emph{left quotient}, or simply \emph{quotient} of a language $L$ by a word $w$ is  defined 
as the language $$w^{-1}L=\{x\in \Sig^*\mid wx\in L \}.$$ 
The \emph{quotient complexity} of $L$ is the number of distinct languages that are quotients of $L$, and will be denoted by $\kappa(L)$ (\emph{kappa} for both \emph{kwotient} and \emph{komplexity}).
Quotient complexity is defined for \emph{any} language, and so may be finite or infinite.

Since languages are sets, it is natural to define set operations on them. The following are typical set operations:  {\em complement\/} ($\ol{L}=\Sig^*\setminus L$),  {\em union\/}  ($K\cup L$),  {\em intersection\/} ($K\cap L$),  {\em difference\/} ($K\setminus L$), and {\em symmetric difference\/} ($K\oplus L$). A general {\em boolean operation\/} with two arguments is denoted by \hbox{$K\circ L$}. 
Since languages are also subsets of a monoid, it is also natural to define \emph{product}, usually called  \emph{(con)catenation},   ($K\cdot L=\{w\in \Sig^*\mid w=uv, u\in K, v\in L\}$), 
  \emph{star} ($K^*=\bigcup_{i\ge 0}K^i$), and \emph{positive closure} ($K^+=\bigcup_{i\ge 1}K^i$).

The operations union, product and star are called \emph{rational} or \emph{regular}. \emph{Rational (or regular) languages} over $\Sig$ are those languages that can be obtained from the set $\{\emp, \{\eps\}\}\cup \{\{a\}\mid a\in \Sig\}$ of  \emph{basic languages}, where $\eps$ is the empty word,  (or, equivalently, from another basis, such as the finite languages over $\Sigma$) using a~finite number of rational operations. Since it is cumbersome to describe regular languages as sets---for example, one has to write $L=(\{\eps\}\cup\{a\})^*\cdot\{b\}$---one normally switches to regular (or rational) expressions. These are the terms of the free algebra over the set $\Sigma \cup\{\emp,\eps\}$ with function symbols\footnote{The symbol $+$ is used instead of $\cup$ in~\cite{Per90}.} $\cup$, $\cdot$, and $^*$~\cite{Per90}. For the example above, one  writes $E=(\eps\cup a)^*\cdot b$.
The mapping $\cL$ from this free algebra onto the algebra of regular languages is defined inductively as follows:
$$\cL(\emp)=\emp, \quad \cL(\eps)=\{\eps\}, \quad \cL(a)=\{a\},$$ 
$$\cL(E\cup F) =\cL(E)\cup \cL(F), \quad \cL(E\cdot F) =\cL(E)\cdot \cL(F), \quad\cL(E^*)=(\cL(E))^*,$$
 where $E$ and $F$ are regular expressions.
The product symbol $\cdot$ is usually dropped, and  languages are denoted by expressions without further mention of the mapping~$\cL$. Since regular languages are closed under complementation, complementation is treated here as a regular operator.

Because regular languages are defined by regular expressions, it is natural to  use regular expressions also to represent their quotients; these expressions are their derivatives~\cite{Brz64}. First, the \emph{$\eps$-function}  of a regular expression $L$, denoted by~$L^\eps$, is defined as follows:
\begin{eqnarray}
a^\eps &=& 
	\left\{ \begin{array}{ll}
		\emp, & \mbox{if $a=\emp$, or $a\in\Sig$};\\
		\eps, & \mbox{if $a=\eps$}.
	\end{array}
	\right.
\end{eqnarray}

\begin{eqnarray}
(\ol{L})^\eps &=& 
	\left\{ \begin{array}{ll}
		\emp, & \mbox{if $L^\eps=\eps$};\\
		\eps, & \mbox{if $L^\eps=\emp$}.
	\end{array}
	\right.
\end{eqnarray}

\begin{equation}
(K\cup L)^\eps = K^\eps\cup L^\eps, \:
(KL)^\eps = K^\eps \cap L^\eps, \:
(L^*)^\eps= \eps.
\end{equation}
One verifies that $\cL(L^\eps)=\{\eps\}$ if $\eps\in L$, and $\cL(L^\eps)=\emp$, otherwise.

The \emph{derivative by a letter} $a\in\Sig$ of a regular expression $L$ is denoted by $L_a$ and defined by structural induction:
\begin{eqnarray}
b_a &=& 
	\left\{ \begin{array}{ll}
		\emp, & \mbox{if $b\in \{\emp,\eps\}$, or $b\in\Sig$ and $b\not= a$};\\
		\eps, & \mbox{if $b=a$}.
	\end{array}
	\right.
\end{eqnarray}
\begin{equation}
\label{eq:derlet}
(\ol{L})_a =\ol{L_a}, \:
(K\cup L)_a = K_a\cup L_a, \:
(KL)_a = K_aL \cup K^\eps L_a, \:
(L^*)_a = L_aL^*.
\end{equation}
The \emph{derivative by a word} $w\in\Sig^*$ of a regular expression $L$ is denoted by $L_w$ and defined by
induction on the length of $w$:
\begin{equation}
L_\eps =  L, \quad
L_w = L_a, \: \mbox{if $w=a\in \Sig$}, \quad
L_{wa} = (L_w)_a.
\end{equation}
A derivative $L_w$ is {\em accepting\/} if $\eps\in L_w$; otherwise it is {\em rejecting\/}.

One can verify by structural induction that $\cL(L_a)=a^{-1}L, \txt{for all} a\in\Sigma$, and then by induction on the length of $w$ that, for all $w\in\Sigma^*$,
\begin{equation}
\cL(L_w)=w^{-1}L.
\end{equation}
Thus every derivative represents a unique quotient of $L$, but there may be many derivatives representing the same quotient.

Two regular expressions are \emph{similar}~\cite{Brz62,Brz64} if one can be obtained from the other using the following rules:
\begin{equation}
L\cup L = L, \quad K\cup L=L\cup K, \quad K\cup (L\cup M)= (K\cup L)\cup M,
\end{equation}
\begin{equation}
L\cup \emp= L, \quad \emp L= L\emp =\emp, \quad \eps L= L\eps =L.
\end{equation}

Upper bounds on the number of dissimilar derivatives, and hence on the quotient complexity, were derived in~\cite{Brz62,Brz64}: If $m$ and $n$ are the quotient complexities of $K$ and $L$, respectively, then
\begin{equation}
\label{eq:JABbounds}
 \kappa(\ol{L})=\kappa(L), \quad \kappa(K\cup L)\le mn, \quad \kappa(KL)\le m2^n, \quad 
\kappa(L^*)\le 2^n -1.
\end{equation}
This immediately implies that the number of derivatives, and hence the number of quotients, of a regular language is finite.

It seems that the upper bounds in Equation~(\ref{eq:JABbounds}), derived in 1962~\cite{Brz62,Brz64}, were the first ``state complexity" bounds to be found for the regular operations. Since the aim at that time was simply to show that the number of quotients of a regular language is finite, the tightness of the bounds was not considered. 

Of course, the concepts above are related  to the more commonly used ideas.
A~\emph{deterministic finite automaton}, or simply \emph{automaton,} is a tuple $$\cA=(Q, \Sig, \delta, q_0,F),$$ where
$Q$ is a finite, non-empty set of \emph{states}, $\Sig$ is a finite, non-empty \emph{alphabet}, $\delta:Q\times \Sig\to Q$ is the \emph{transition function}, $q_0\in Q$ is the \emph{initial state}, and $F\subseteq Q$ is the set of \emph{final states}.
The transition function is extended to $\delta:Q\times \Sig^*\to Q$ as usual. A word $w$ is \emph{recognized} (or \emph{accepted}) by automaton $\cA$ if $\delta(q_0,w)\in F$.
It was proved by Nerode~\cite{Ner58} that a language $L$ is recognizable by a finite automaton if and only if $L$ has a finite number of quotients.

The \emph{quotient automaton} of a regular language $L$ is 
$\cA=(Q, \Sig, \delta, q_0,F)$, where $Q=\{w^{-1}L\mid w\in\Sig^*\}$, $\delta(w^{-1}L,a)=(wa)^{-1}L$, 
$q_0=\eps^{-1}L=L$, and \hbox{$F=\{w^{-1}L \mid \eps\in w^{-1}L\}$}.

It should now be clear that the state complexity of a regular language $L$ is the number of states in its quotient automaton, \ie, the number $\kappa(L)$  of its quotients.
This terminology change may seem trivial, but has some nontrivial  consequences.

For convenience, derivative notation will be used to represent quotients, in the same way as regular expressions are used to represent regular languages. 

By convention, $L_w^\eps$  always means $(L_w)^\eps$. 

Several proofs are omitted because of space limitations.

\section{Derivation of  bounds using quotients}
\label{sec:bounds}
Since languages over one-letter alphabets have very special properties, we usually assume that the alphabet has at least two letters.
The complexity of operations on unary languages has been studied in~\cite{PiSh02,Yu01}.

In the  literature on state complexity, it is assumed that  automata $\cA$ and $\cB$ accepting languages $K$ and $L$, respectively, are given. An assumption has to be made that the automata are ``complete'', \ie, that for each
$q\in Q$ and $a\in\Sig$, $\delta(q,a)$ is defined~\cite{YZS94}. 
In particular,  if a ``dead" or ``sink" state, which accepts no words, is present,  one has to check that only one such state is included~\cite{HaSa09}. Also, every state must be ``useful'' in the sense that it appears on some accepting path~\cite{HSW06}. 

Suppose that a bound on the state complexity of $f(K,L)$ is to be computed, where $f$ is some regular operation. 
In some cases a DFA accepting $f(K,L)$ is constructed directly, 
(\eg, Theorems~2.3 and~3.1 in~\cite{YZS94}), 
or an NFA  with multiple  initial states is used,  
and then converted to a DFA by the subset construction (\eg, Theorem 4.1 in~\cite{YZS94}).
Sometimes an NFA with empty-word transitions is used and then converted to a DFA~\cite{SaYu07}.
The constructed  automata  then have to be proved minimal.

Much of this is unnecessary. If quotients are used, the problem of completeness does not arise, since all the quotients of a language are included. A quotient  is either empty or  ``useful''. If the empty quotient is present, then it appears only once. Since quotients are distinct languages, the set of quotients of a language is always minimal.
To find an upper bound on the state complexity, instead of constructing an automaton for $f(K,L)$, we need only find a regular expression for the typical quotient, and then do some counting. This is illustrated below for the basic regular operations.

\subsection{Bounds for basic operations}
\label{subsec:basic}
The following are some useful formulas for  the derivatives of regular expressions:
\begin{theorem}
\label{thm:der}
If $K$ and $L$ are regular expressions, then

\begin{equation}
\label{eq:comp}
(\ol{L})_w=\ol{L_w},
\end{equation}

\begin{equation}
\label{eq:bool}
(K\circ L)_w=K_w\circ L_w,
\end{equation}
\begin{equation}
\label{eq:prod}
(KL)_w=  K_wL \cup K^\eps L_w\cup\left(\bigcup_{{w=uv}\atop {\;\; u,v\in\Sigma^+}} 
K_u^\eps L_v\right).
\end{equation}
For the Kleene star, $(L^*)_\eps=\eps\cup LL^*$, and for  $w\in\Sigma^+$, 
\begin{equation}
\label{eq:star}
 (L^*)_w=  \left( 
\bigcup_{{w=uv}\atop{u,v\in\Sigma^*}}
(L^*)_u^\eps
L_v
\right)
L^*.
\end{equation}
\end{theorem}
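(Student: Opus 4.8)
The plan is to prove all four identities by induction on $|w|$, taking as the base the single-letter rules of~(\ref{eq:derlet}) together with the recursion $L_{wa}=(L_w)_a$, and reading each equation as an equality of the quotients represented (that is, after applying $\cL$), so that the similarity rules may be used freely to simplify. Throughout I would exploit that an $\eps$-function value such as $K^\eps$ or $K_u^\eps$ is the \emph{constant} expression $\eps$ or $\emp$: its derivative by any letter is $\emp$, while $(K_u^\eps)^\eps=K_u^\eps$, so the product rule collapses to $(K_u^\eps L_v)_a=K_u^\eps L_{va}$.

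For~(\ref{eq:comp}) the induction is immediate: $(\ol L)_\eps=\ol L=\ol{L_\eps}$, and if $(\ol L)_w=\ol{L_w}$ then $(\ol L)_{wa}=((\ol L)_w)_a=(\ol{L_w})_a=\ol{(L_w)_a}=\ol{L_{wa}}$ by the letter rule $(\ol M)_a=\ol{M_a}$. For~(\ref{eq:bool}) I would first reduce an arbitrary boolean connective $\circ$ to $\cup$ and $\ol{\cdot}$; the letter rules for union and complement then give $(K\circ L)_a=K_a\circ L_a$, and the same one-step induction extends this from letters to words.

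The product identity~(\ref{eq:prod}) is the first place where real bookkeeping appears. After verifying the base case $w=\eps$ (where the extra term $K^\eps L$, equal to $\emp$ or to $L\subseteq KL$, is absorbed), I would assume the formula for $w$ and differentiate it by a letter $a$. Distributing $(\cdot)_a$ over the finite union, applying the product rule to $K_wL$ to get $K_{wa}L\cup K_w^\eps L_a$, and using the constant-factor remark on every $\eps$-function term, I obtain
\[
(KL)_{wa}=K_{wa}L\cup K_w^\eps L_a\cup K^\eps L_{wa}\cup\bigcup_{{w=uv}\atop{u,v\in\Sigma^+}}K_u^\eps L_{va}.
\]
It then remains to match this against the target formula for $wa$. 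The key observation is that a factorization $wa=u'v'$ with $u',v'\in\Sigma^+$ has $v'$ ending in $a$, so either $v'=a$ and $u'=w$ (contributing $K_w^\eps L_a$, present exactly when $w\ne\eps$) or $v'=va$ with $w=uv$ and $u,v\in\Sigma^+$ (contributing $K_u^\eps L_{va}$); the degenerate case $w=\eps$ is reconciled by the idempotence $L\cup L=L$. This re-indexing of the sum over cut points is the main obstacle.

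For the star identity~(\ref{eq:star}) I would first record $(L^*)_\eps=L^*=\eps\cup LL^*$, and then induct on $w\in\Sigma^+$, the base case $w=a$ reducing to the letter rule $(L^*)_a=L_aL^*$ after noting $(L^*)_a^\eps=L_a^\eps$. The inductive step runs exactly as for the product: differentiate $\bigl(\bigcup_{w=uv}(L^*)_u^\eps L_v\bigr)L^*$ by $a$, treat each $(L^*)_u^\eps$ as a constant, and re-index the factorizations of $wa$. I expect the star case to be the hardest, since $L^*$ occurs both as the object being differentiated (so the induction hypothesis must be fed back in through the trailing $L^*$) and as the trailing factor, and since the union now ranges over all $u,v\in\Sigma^*$, including the empty prefix and suffix; aligning the boundary terms $u=\eps$ and $v=\eps$ with the letter rule, and absorbing the resulting duplicates by $L\cup L=L$, is where the argument must be checked most carefully.
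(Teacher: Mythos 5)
The paper itself gives no proof of Theorem~\ref{thm:der}---it is among the proofs ``omitted because of space limitations'' (the identities go back to~\cite{Brz64})---so there is nothing to compare line by line; judged on its own, your plan is the standard and correct one: induction on $|w|$ from the single-letter rules~(\ref{eq:derlet}), treating $\eps$-function values as constants, and re-indexing the union over factorizations of $wa$. Your handling of~(\ref{eq:comp}), (\ref{eq:bool}) and the inductive step of~(\ref{eq:prod}) is right, including the observation that the cut points of $wa$ split into $v'=a$ (giving $K_w^\eps L_a$) and $v'=va$ with $w=uv$, $u,v\in\Sigma^+$. One point deserves sharper wording: the boundary terms are not all disposed of by the similarity rule $L\cup L=L$. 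In the base case of~(\ref{eq:prod}) the surplus term $K^\eps L$ is absorbed because $L\subseteq KL$ when $\eps\in K$ (you say this correctly), and in the star identity the factorization $v'=\eps$ contributes $(L^*)_{wa}^\eps\, L\, L^*$, which never arises from differentiating the inductive hypothesis; it must be shown redundant by the containment $LL^*\subseteq L^*\subseteq L_vL^*$ for any $v$ with $\eps\in L_v$ (such a $v$ exists exactly when $(L^*)_{wa}^\eps=\eps$). These are genuine language containments, not similarities, so your opening declaration that the identities are read after applying $\cL$ is doing real work and should be invoked explicitly at those two spots. With that clarification the argument goes through.
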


Theorem~\ref{thm:der} can be applied to obtain upper bounds on the complexity of  operations. 
In Theorem~\ref{thm:basic} below, the second part is a slight generalization of the  bound in Theorem~4.3 of~\cite{YZS94}.
The third and fourth parts are  reformulations of  the bounds in Theorem~2.3 and~2.4, and 
 of Theorem~3.1 of~\cite{YZS94}:
\begin{theorem}
\label{thm:basic}
For any languages $K$ and $L$ with $\kappa(K)=m$ and $\kappa(L)=n$:
\be
\item
$
\kappa(\ol{L})= n.
$
\item
$\kappa(K\circ L)\le mn.
$
\item{\it
Suppose $K$ has $k$ accepting quotients and $L$ has $l$ accepting quotients.}
	\be
	\item{\it
	 If $k=0$ or $l=0$, then $\kappa(KL)=1$.}
	\item{\it
	If  $k,l>0$ and $n=1$, then $\kappa(KL)\le m-(k-1)$.}
	\item{\it
	If $k,l>0$ and $n>1$, then $\kappa(KL)\le m2^n-k2^{n-1}. $}
	\ee
\item
	\be
	\item{\it
	If $n=1$, then $\kappa(L^*)\le 2$.}
	\item{\it
	If $n>1$ and $L_\eps$ is the only accepting quotient of  $L$, then
	$\kappa(L^*)=n$.}
	\item{\it
	If $n>1$ and   $L$ has $l>0$ accepting quotients not equal to $L$, then
	$\kappa(L^*)\le 2^{n-1}+2^{n-l-1}$.}
	\ee
\ee
\end{theorem}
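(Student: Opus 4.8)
The plan is to read each of the four bounds directly off the derivative formulas of Theorem~\ref{thm:der}, treating the parts in increasing order of difficulty. Parts~1 and~2 are immediate. By~(\ref{eq:comp}), $(\ol{L})_w=\ol{L_w}$, so the quotients of $\ol{L}$ are exactly the complements of the quotients of $L$; since complementation is an involution on languages it is a bijection between the two quotient sets, whence $\kappa(\ol{L})=\kappa(L)=n$. By~(\ref{eq:bool}), $(K\circ L)_w=K_w\circ L_w$ is determined by the pair $(K_w,L_w)$, and there are at most $m$ choices for $K_w$ and $n$ for $L_w$, so $\kappa(K\circ L)\le mn$.

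For Part~3 I would rewrite the product formula~(\ref{eq:prod}) as $(KL)_w=K_wL\cup\bigcup_{R\in S_w}R$, where $S_w$ is the set of quotients $L_v$ switched on by an accepting prefix-derivative (those $v$ with $w=uv$ and $K_u^\eps=\eps$). A quotient of $KL$ is then named by a pair $(K_w,S_w)$, giving the crude count $m\cdot 2^n$. The decisive refinement is that if $K_w$ is accepting then $\eps\in K_w$, so $K_wL\supseteq \eps L=L=L_\eps$; thus $L_\eps$ is absorbed into the term $K_wL$ and its presence in $S_w$ does not affect the language. Hence for each of the $k$ accepting quotients $K_w$ only $2^{n-1}$ subsets are relevant, while for the $m-k$ rejecting ones all $2^{n}$ may be, yielding $\kappa(KL)\le k2^{n-1}+(m-k)2^{n}=m2^{n}-k2^{n-1}$ (Part~3(c)). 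The side conditions handle the degeneracies: $k=0$ or $l=0$ forces $K=\emp$ or $L=\emp$, so $KL=\emp$ and $\kappa(KL)=1$ (Part~3(a)); and $n=1$ with $l>0$ forces $L=\Sig^*$, so $KL=K\Sig^*$, where every accepting $K_w$ collapses to the single quotient $\Sig^*$ and the rejecting quotients contribute at most $m-k$ further languages, for the total $m-(k-1)$ (Part~3(b)).

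Part~4 is the heart of the matter. For $w\in\Sig^+$, formula~(\ref{eq:star}) can be written $(L^*)_w=\bigl(\bigcup_{R\in T_w}R\bigr)L^*$ with $T_w$ a subset of the quotients of $L$, and I would read the evolution of $T_w$ as a subset construction: on reading $a$ each $R\in T_w$ moves to $R_a$, and the $\eps$-closure back to the start quotient forces any reachable $T_w$ that meets the set $F$ of accepting quotients to contain $L_\eps$, so that reading a letter from such a subset automatically re-spawns the thread $L_a=(L_\eps)_a$. Counting the subsets $T$ obeying the invariant ``$T\cap F\ne\emp\Rightarrow L_\eps\in T$'' gives $2^{n-1}+2^{n-l-1}$, where $l$ is the number of accepting quotients other than $L_\eps$; this is Part~4(c). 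In the situation of Part~4(b), where $L_\eps$ is the only accepting quotient, the spawned copy $(L_\eps)_a$ is already the image of $L_\eps\in T_w$, so no genuinely new thread is ever created and every reachable $T_w$ is a singleton $\{L_w\}$; the quotients of $L^*$ are then exactly the languages $L_wL^*$, of which there are at most $n$. Part~4(a) is the $n=1$ degeneracy, where $L=\emp$ gives $L^*=\{\eps\}$ with two quotients and $L=\Sig^*$ gives $L^*=\Sig^*$ with one.

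I expect the main obstacle to be the bookkeeping for the initial quotient $L^*=(L^*)_\eps$, which is named by the start configuration rather than by a subset of quotients of $L$, and the exact equality in Part~4(b). For the upper bounds one must check that $L^*$ does not produce a quotient beyond those already counted: when $\eps\in L$ one has $L^*=L_\eps L^*$, so $L^*$ is represented by the subset $\{L_\eps\}$ and nothing is lost, while the case $\eps\notin L$ requires verifying separately that the tally is not exceeded (for instance, that a corresponding subset is unreachable). For Part~4(b) the singleton argument only delivers $\kappa(L^*)\le n$, and one must still prove the reverse inequality by showing that distinct quotients $L_w$ of $L$ yield distinct quotients $L_wL^*$ of $L^*$; this injectivity is where I would expect the argument to need the most care.
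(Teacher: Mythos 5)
Your treatment of parts 1, 2, 3 and 4(a),(c) is essentially the argument the paper gives: complementation via $(\ol{L})_w=\ol{L_w}$, the pair count for boolean operations, the pair $(K_w,S_w)$ for the product together with the observation that $\eps\in K_w$ forces $L=L_\eps$ into $K_wL$ (hence only $2^{n-1}$ relevant subsets for each of the $k$ accepting quotients of $K$), and, for the star, the count of subsets obeying ``accepting member present $\Rightarrow L_\eps$ present,'' which is the same $2^{n-1}+2^{n-l-1}$ the paper obtains by splitting on whether $L$ itself is accepting or rejecting. Your bookkeeping worry about $(L^*)_\eps$ is legitimate but resolves exactly as you suggest: the empty subset satisfies your invariant yet is unreachable (every $T_w$ with $w\in\Sig^+$ contains $L_w$), so it leaves one slot free for $(L^*)_\eps$ whether or not $\eps\in L$.

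The genuine gap is part 4(b), which asserts the equality $\kappa(L^*)=n$, not merely $\kappa(L^*)\le n$. Your singleton-thread argument delivers only the upper bound, and you explicitly leave the matching lower bound (injectivity of $L_w\mapsto L_wL^*$) unproved, so as written the proposal does not establish 4(b). The paper closes this case in one line with an observation that makes the subset machinery unnecessary there: if $L_\eps=L$ is the only accepting quotient, then $\eps\in L$, and for every $u\in L$ the quotient $L_u$ is accepting and hence equals $L$, so $uv\in L$ iff $v\in L$; thus $LL\subseteq L$ and $\eps\in L$ give $L^*=L$, whence $\kappa(L^*)=\kappa(L)=n$ with no further work. (The same identity also supplies the injectivity you were after, since then $L_wL^*=L_wL=L_w$, but the direct route is to prove $L^*=L$ first.) You should replace the deferred ``injectivity'' step with this argument or an equivalent one.
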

\begin{proof}
The first part is well-known,  and the second follows from (\ref{eq:bool}).

For the product,
if $k=0$ or $l=0$, then $KL=\emp$ and $\kappa(KL)=1$. Thus assume that $k,l>0$.
If $n=1$, then $L=\Sig^*$ and  $w\in K$ implies $(KL)_w=\Sig^*$. 
Thus all $k$ accepting quotients of $K$ produce the one quotient  $\Sig^*$ in $KL$.
For each rejecting quotient of $K$, we have two choices for the union of quotients of~$L$ in~(\ref{eq:prod}): the empty union or $\Sig^*$.
If we choose the empty union, we can have at most $m-k$ quotients of $KL$.
Choosing $\Sig^*$ results in $(KL)_w=\Sig^*$, which has been counted already. Altogether, there are at most $1+m-k$ quotients of $KL$.
Suppose now that $k,l>0$ and $n>1$. If $w\notin K$, then we can choose~$K_w$ in $m-k$ ways, and the union of quotients of $L$ in $2^n$ ways. 
If $w\in K$, then we can choose~$K_w$ in $k$ ways, and the set of quotients of $L$ in $2^{n-1}$ ways, since $L$ is then always present.  Thus we have $(m-k)2^n+k2^{n-1}$.

For the star, if $n=1$, then $L=\emp$ or $L=\Sig^*$. In the first case, $L^*=\eps$, and $\kappa(L^*)=2$; in the second case, $L^*=\Sig^*$ and $\kappa(L^*)=1$.  
Now suppose that $n>1$; hence $L$ has at least one accepting quotient. If  $L$ is the only accepting quotient of $L$, then $L^*=L$ and $\kappa(L^*)=\kappa(L)$.

Now assume that $n>1$ and $l>0$.
From~(\ref{eq:star}), every  quotient of $L^*$ by a non-empty word is a union 
of a subset of quotients of $L$, followed by $L^*$. 
Moreover, that union is non-empty, because $(L^*)_\eps^\eps L_w$ is always present.
We have two cases:
\be
\item Suppose $L$ is rejecting. Then $L$ has $l$ accepting quotients.
	\be
	\item
	If no accepting quotient of $L$ is included in the subset, then there are $2^{n-l}-1$ such 	subsets possible, the union being non-empty because  $L_w$ is always included.
	\item
	If an accepting quotient of $L$ is included,
	then $\eps\in (L^*)_w$,  $(L^*)_w^{\eps}=\eps$,
	and  $L = (L^*)_w^{\eps}L_{\eps}$ is also included.
	We have $2^l-1$ non-empty subsets of accepting quotients of $L$ and $2^{n-l-1}$ subsets of rejecting quotients, since $L$ is not counted.
		\ee
		Adding 1 for $(L^*)_{\eps}$, we have a total of  
	$2^{n-l}-1+(2^l-1)2^{n-l-1}+ 1=2^{n-1}+2^{n-l-1}.$
\item Suppose $L$ is accepting. Then $L$ has $l+1$ accepting quotients.
	\be
	\item
	If there is no accepting quotient,  there are $2^{n-l-1}-1$ 	non-empty subsets of rejecting quotients.
	\item
	If an accepting quotient of $L$ is included, then $L$ is included, and  $2^{n-1}$ subsets can be added to $L$.
	\ee
We need not add  $(L^*)_\eps$, since $\epsilon \cup LL^*= LL^*$ in this case, and this has already been counted. The total is $2^{n-1}+2^{n-l}-1$.
\ee
The worst-case bound of $2^{n-1}+2^{n-l-1}$ occurs in the first case only.
\end{proof}
\goodbreak

\subsection{Witnesses to bounds for basic operations}
\label{subsec:witness}
Finding witness languages showing that a bound is tight is often challenging.
However, once a guess is made, the verification can be done using quotients.

Let $|w|_a$ be the number of $a$'s in $w$, for $a\in \Sig$ and $w\in \Sig^*$.
Unary, binary,  and ternary languages are languages over a one-, two-, and three-letter alphabet, respectively. 

\bi
\item
{\bf Union and Intersection}
If we have a bound for intersection, then for union we can use the fact that $\kappa(\ol{K}\cup \ol{L})=\kappa(\ol{\ol{K}\cup\ol{L}})=\kappa(K\cap L)$; thus the pair $(\ol{K},\ol{L})$ is a witness for union. Similarly, given a witness for union, we also have a witness for intersection.

The upper bound $mn$ for the complexity of  intersection was observed in 1957\footnote{The work was done in 1957, but published in 1959.} by Rabin and Scott~\cite{RaSc59}.
Binary languages  
$$K=\{w\in\{a,b\}^*\mid |w|_a\equiv m-1 \mbox{ mod } m\}$$ 
and 
$$L=\{w\in\{a,b\}^*\mid |w|_b\equiv n-1 \mbox{ mod } n\}$$
have quotient complexities $m$ and $n$, respectively. In 1970 Maslov~\cite{Mas70} stated without proof that $K\cup L$ meets this upper bound $mn$. 
Yu, Zhuang and K.~Salomaa~\cite{YZS94},  used similar languages 
$$K'=\{w\in\{a,b\}^*\mid |w|_a\equiv 0 \mbox{ mod } m\}$$ and 
$$L'=\{w\in\{a,b\}^*\mid |w|_b\equiv 0 \mbox{ mod } n\}$$
for intersection, apparently unaware of~\cite{Mas70}.
Hricko, Jir\'askov\'a and Szabari~\cite{HJS05} showed that a complete hierarchy of quotient complexities of binary languages exists between the minimum complexity 1 and the maximum complexity $mn$. 
More specifically, it was proved that for any integers $m,n,\alpha$ such that $m\ge2$, $n\ge 2$ and $1\le \alpha \le mn$, there exist binary\footnote{The proof in~\cite{HJS05} is for ternary languages; a proof for the binary case can be found in~\cite{Hri05}.} languages
 $K$ and $L$ such that $\kappa(K)=m$, $\kappa(L)=n$, and $\kappa(K\cup L)=\alpha$, and the same holds for intersection. 

For a one-letter alphabet $\Sig=\{a\}$, Yu showed that the bound can be reached if $m$ and $n$ are relatively prime~\cite{Yu01}. The witnesses are \hbox{$K''=(a^m)^*$} and $L''=(a^n)^*$.
For other cases, see the paper by Pighizzini and Shallit~\cite{PiSh02}.

\item
{\bf Set difference}
For set difference we have $\kappa(K'\setminus \ol{L'})=\kappa(K'\cap L')$; thus the pair $(K',\ol{L'})$ is a witness.

\item
{\bf Symmetric difference}
For symmetric difference, let $m,n\ge 1$, let \hbox{$K=(b^*a)^{m-1}(a\cup b)^*$} and let $L=(a^*b)^{n-1}(a\cup b)^*$.
There are $mn$ words of the form $a^ib^j$, where $0\le i\le m-1$ and \hbox{$0\le j\le n-1$}.
We claim that all the quotients of $K\oplus L$ by these words are distinct. Let $x=a^ib^j$ and $y=a^kb^l$.
If $i<k$, let $u=a^{m-1-k}b^n$. Then $xu\notin K$,  $yu\in K$, and $xu,yu\in L$, showing that 
$xu\in K \oplus L$, and $yu\notin K\oplus L$, \ie, that $(K\oplus L)_x\not= (K\oplus L)_y$.
Similarly, if $j<l$, let $v=a^mb^{n-1-l}$. Then $xv\in K\oplus L$, but $yv\notin K\oplus L$.
Therefore all the quotients of $K\oplus L$ by these $mn$ words are distinct.

For a one-letter alphabet, the witnesses are $K''$ and $L''$ as in the case of union above.
\item
{\bf Other boolean functions}
There are six more two-variable boolean functions  that depend on both variables: $\ol{K}\cup\ol{L}=\ol{K\cap L}$, $\ol{K}\cap\ol{L}=\ol{K\cup L}$, $\ol{K}\cup L=\ol{K\setminus L}$, $\ol{K}\cap L=L\setminus K$, $K\cup\ol{L}=\ol{L\setminus K}$, and $\ol{K\oplus L}$.  The witnesses for these functions can be found using the four functions above.

\item
{\bf Product}
The upper bound of $m2^n-2^{n-1}$ was given by Maslov in 1970~\cite{Mas70}, and he stated without proof that it is tight for binary languages 
$$K=\{w\in\{a,b\}^*\mid |w|_a\equiv m-1 \mbox{ mod } m\}$$
and 
$$L=(a^*b)^{n-2}(a\cup b)(b\cup a(a\cup b))^*.$$
The bound was refined by Yu, Zhuang and K.~Salomaa~\cite{YZS94} to $m2^n-k2^{n-1}$, where $k$ is the number of accepting quotients of $K$.
Jir\'asek, Jir\'askov\'a and Szabari~\cite{JJS05} proved that, for any integers $m,n,k$ such that $m\ge2$, $n\ge 2$ and $0<k<m$, there exist binary languages
 $K$ and $L$ such that $\kappa(K)=m$, $\kappa(L)=n$, and $\kappa(K L)=m2^n-k2^{n-1}$. 
Furthermore, Jir\'askov\'a~\cite{Jir09} proved that, for all $m$, $n$, and $\alpha$ such that
either $n=1$ and $1\le \alpha\le m$, or $n\ge 2$ and $1\le\alpha\le m2^n-2^{n-1}$, there exist languages $K$ and $L$ with $\kappa(K)=m$ and $\kappa(L)=n$, defined over a growing alphabet, such that $\kappa(KL)=\alpha$.

For a one-letter alphabet, $mn$ is a tight bound for product if $m$ and $n$ are relatively prime~\cite{YZS94}. The witnesses are \hbox{$K=(a^m)^*a^{m-1}$} and \hbox{$L=(a^n)^*a^{n-1}$}.
See also~\cite{PiSh02}.
\item
{\bf Star}
Maslov~\cite{Mas70} stated\footnote{The bound is incorrectly stated as $2^{n-1}+2^{n-2}-1$, but the example is correct.} without proof that $\kappa(L^*)\le 2^{n-1}+2^{n-2}$, and provided  a binary language meeting this bound.
Three cases were considered by Yu, Zhuang and K.~Salomaa~\cite{YZS94}:
\bi
\item
$n=1$. If $L=\emp$, then $\kappa(L)=1$ and $\kappa(L^*)=2$. If $L=\Sig^*$, then $\kappa(L^*)=1$.
\item
$n=2$. $L=\{w\in \{a,b\}^*\mid |w|_a\equiv 1 \mbox{ mod } 2\}$ has $\kappa(L)=2$, and $\kappa(L^*)=3$.
\item
$n>2$.  Let $\Sig=\{a,b\}$. Then $L=(b\cup a\Sig^{n-1})^*a\Sig^{n-2}$ has $n$ quotients, one of which is accepting, and $\kappa(L^*)=2^{n-1}+2^{n-2}$. This example is different from Maslov's.
\ei
Moreover, Jir\'askov\'a~\cite{Jir08} proved that, for all integers $n$ and $\alpha$ with either $1 = n \le \alpha \le 2$, or $n \ge 2 $
and $1 \le\alpha\le  2^{n-1}+2^{n-2}$, there exists a language $L$ over a $2^n$--letter alphabet such that has $\kappa(L)=n$ and $\kappa(L^*)=\alpha$. 

For a one-letter alphabet, $n^2-2n+2$ is a tight bound for star~\cite{YZS94}. The witness is $L''=(a^n)^*a^{n-1}$. See also~\cite{PiSh02}.
\ei

\section{Generalization of ``non-returning'' state}
\label{sec:non-ret}
A quotient $L_w$ of a language $L$ is {\em \ur\/} if $L_x=L_w$ implies that $x=w$.
If $L_{wa}$ is \ur{} for $a\in \Sig$,  then so is $L_w$. 
Thus, if $L$ has a \ur{} quotient, then $L$ itself is \ur{} by the empty word, \ie, the minimal automaton of $L$ is \emph{non-returning}\footnote{The term ``non-returning'' suggests that once a state is left it cannot be visited again. However,  such non-returning states are not necessarily \ur.}.
Thus the set of \ur{} quotients of $L$ is a tree with root $L$, if it is non-empty.

We now apply the concept of \ur{} quotients to boolean operations and product.
\begin{theorem}
\label{thm:urbool}
Suppose $\kappa(K)=m$, $\kappa(L)=n$,  $K$ and $L$ have $m_u$ and $n_u$ \ur{} quotients, respectively, and there are $r$ words $w_i$ such that both $K_{w_i}$ and~$L_{w_i}$ are \ur{}. 
If $\circ$ is a boolean operator, then
\begin{equation}
\kappa(K\circ L)\leq mn-(\alpha+\beta+\gamma), \mbox{ where}
\end{equation}
\begin{equation}
\alpha = r(m+n)-r(r+1); \;
\beta = (m_u-r)(n-(r+1)); \;
\gamma = (n_u-r)(m-m_u-1).
\end{equation}
If $K$ has $k$ accepting quotients, $t$ of which are \ur, and $s$ rejecting \ur{} quotients, then\begin{equation}
\kappa(KL)\leq m2^n -k2^{n-1} - s(2^n-1) - t(2^{n-1}-1).
\end{equation}

\end{theorem}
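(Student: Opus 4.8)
The plan is to read both bounds off the product constructions implicit in (\ref{eq:bool}) and (\ref{eq:prod}), and then to prune the resulting state space using the single decisive fact about \ur{} quotients: a \ur{} quotient is reached by exactly one word. I will treat the boolean case and the product separately, but the pruning mechanism is the same in both.

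For the boolean case, (\ref{eq:bool}) gives every quotient of $K\circ L$ as $K_w\circ L_w$, so $\kappa(K\circ L)$ is bounded by the number of distinct pairs $(K_w,L_w)$ that actually occur, \ie, by the number of reachable cells in the $m\times n$ grid whose rows are indexed by quotients of $K$ and columns by quotients of $L$. First I would record the consequence of unique reachability: if $K_w$ is \ur{}, then $w$ is the only word reaching $K_w$, so the row of $K_w$ has a single reachable cell, namely $(K_w,L_w)$, and its other $n-1$ cells are unreachable; symmetrically, a \ur{} quotient $L_w$ forces all but one cell of its column out. Writing $U_{\mathrm{row}}$ and $U_{\mathrm{col}}$ for the cells thereby forced to be unreachable, we get $|U_{\mathrm{row}}|=m_u(n-1)$ and $|U_{\mathrm{col}}|=n_u(m-1)$, and since all these cells are genuinely unreachable, $\kappa(K\circ L)\le mn-|U_{\mathrm{row}}\cup U_{\mathrm{col}}|$.

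The heart of the argument, and the step I expect to be the main obstacle, is evaluating $|U_{\mathrm{row}}\cap U_{\mathrm{col}}|$ so that $|U_{\mathrm{row}}\cup U_{\mathrm{col}}| = m_u(n-1)+n_u(m-1)-|U_{\mathrm{row}}\cap U_{\mathrm{col}}|$ collapses to $\alpha+\beta+\gamma$. This intersection sits inside the $m_u\times n_u$ block of cells that are \ur{} in both coordinates, so I would begin from the $m_u n_u$ such cells and subtract those that are the single reachable cell of their own row or of their own column. The key lemma is that inside this block ``unique cell of its row'' and ``unique cell of its column'' describe the same cells: if $K_w$ is \ur{} with unique word $w$ and the paired $L_w$ is also \ur, then $w$ is simultaneously the unique word reaching $L_w$, so $(K_w,L_w)$ is distinguished for both its row and its column. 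These two families therefore coincide and consist of exactly $r$ cells, one for each of the $r$ words $w_i$ of the hypothesis with $K_{w_i}$ and $L_{w_i}$ both \ur; removing them leaves $|U_{\mathrm{row}}\cap U_{\mathrm{col}}|=m_u n_u-r$. A direct computation then gives $|U_{\mathrm{row}}\cup U_{\mathrm{col}}|=\alpha+\beta+\gamma$, and since distinct cells may collapse to the same language $K_w\circ L_w$, the inequality $\kappa(K\circ L)\le mn-(\alpha+\beta+\gamma)$ follows.

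For the product I would reuse the analysis from the proof of Theorem~\ref{thm:basic}: each quotient $(KL)_w$ is determined by $K_w$ together with the subset $P_w$ of quotients of $L$ contributed in (\ref{eq:prod}), where $L=L_\eps\in P_w$ whenever $K_w$ is accepting, and the unpruned count $m2^n-k2^{n-1}$ comes from $2^n$ subsets for each of the $m-k$ rejecting quotients of $K$ and $2^{n-1}$ for each of the $k$ accepting ones. The refinement is that a \ur{} quotient $K_w$ pins down $w$, hence pins down $P_w$, so it contributes exactly one pair instead of $2^n$ or $2^{n-1}$. Splitting the rejecting quotients into the $s$ that are \ur{} and the $m-k-s$ that are not, and the accepting ones into the $t$ that are \ur{} and the $k-t$ that are not, I would bound the number of pairs by $(m-k-s)2^n+(k-t)2^{n-1}+s+t$, which rearranges to $m2^n-k2^{n-1}-s(2^n-1)-t(2^{n-1}-1)$; as $\kappa(KL)$ is at most the number of such pairs, this is the stated bound. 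The only point needing care is that the ``$L$ always present'' constraint continues to hold for a \ur{} accepting quotient, but since each such quotient contributes a single pair, no double counting arises.
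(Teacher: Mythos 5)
Your argument is correct. Note that the paper itself omits the proof of this theorem (``Several proofs are omitted because of space limitations''), so there is nothing to compare against except the worked example; your decomposition matches it exactly --- inclusion--exclusion gives $|U_{\mathrm{row}}\cup U_{\mathrm{col}}|=m_u(n-1)+n_u(m-1)-m_un_u+r$, which algebraically equals $\alpha+\beta+\gamma$ and yields $18+4+2=24$ removed pairs for the automata of Fig.~1, and your product bound reproduces the reduction $3(2^5-1)+1(2^4-1)=108$. The two key observations you isolate --- that a \ur{} quotient confines its entire row (resp.\ column) to a single reachable cell, and that within the doubly-\ur{} block the ``unique cell of its row'' and ``unique cell of its column'' families coincide and have size $r$ --- are exactly what is needed, and the product case correctly reuses the $(K_w,P_w)$ bookkeeping from Theorem~\ref{thm:basic}.
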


The following observation was stated for union and intersection of finite languages in~\cite{Yu01}; we add the suffix-free case:
\begin{corollary}
\label{cor:suffree1}
\quad If $K$ and $L$ are non-empty and finite or suffix-free languages and $\kappa(K)=m>1$,\linebreak $\kappa(L)=n>1$, then $\kappa(K\circ L)\le mn -(m+n-2)$.
\end{corollary}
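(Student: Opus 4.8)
The plan is to derive the bound from Theorem~\ref{thm:urbool} (equivalently, from the pair-counting behind it), reducing the entire corollary to one structural fact: for a non-empty finite or suffix-free language the full language itself is always a \emph{uniquely reachable} quotient, reached by no word other than $\eps$. Once that is in hand, the counting is immediate and uniform in the boolean operator $\circ$.

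First I would establish this fact for each class. If $K$ is finite and non-empty, let $z$ be a longest word of $K$; if $K_w=K$ for some $w\neq\eps$, then $z\in K=K_w=w^{-1}K$ forces $wz\in K$ with $|wz|>|z|$, contradicting maximality of $z$, so $K=K_\eps$ is reached only by $\eps$, and likewise for $L$. If $K$ is suffix-free and non-empty, suppose $K_w=K$ with $w\neq\eps$ and pick any $x\in K$; then $x\in K_w$ gives $wx\in K$, and since $w\neq\eps$ the word $x$ is a proper suffix of $wx$, so $K$ would contain a word that is a proper suffix of another word of $K$, contradicting suffix-freeness. Hence $K=K_\eps$ (and $L=L_\eps$) is uniquely reachable in both cases.

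Second I would feed this into the count. By~(\ref{eq:bool}) every quotient of $K\circ L$ has the form $K_w\circ L_w$, so $\kappa(K\circ L)$ is at most the number of reachable pairs $(K_w,L_w)$, of which there are at most $mn$. Because $K$ is reached only by $\eps$, the only reachable pair with first coordinate $K$ is $(K_\eps,L_\eps)=(K,L)$, so the $n-1$ pairs $(K,L')$ with $L'\neq L$ are unreachable; symmetrically, the $m-1$ pairs $(K',L)$ with $K'\neq K$ are unreachable. These two families are disjoint, since the first consists of pairs whose first coordinate is $K$ and whose second coordinate differs from $L$, while every pair in the second has second coordinate exactly $L$. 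Thus at least $(m-1)+(n-1)=m+n-2$ of the $mn$ pairs never occur, giving $\kappa(K\circ L)\le mn-(m+n-2)$ for every boolean $\circ$. In the notation of Theorem~\ref{thm:urbool} this is precisely the contribution of the single common uniquely reachable word $w_1=\eps$ (the term $\alpha$ with $r=1$, $m_u,n_u\ge 1$), with any further uniquely reachable quotients only strengthening the bound.

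The main obstacle is the suffix-free half of the structural fact: unlike finiteness it offers no longest word, so one must argue straight from the defining property, the clean point being that an equality $K_w=K$ with $w\neq\eps$ manufactures from any $x\in K$ the pair $x,\,wx\in K$ with $x$ a proper suffix of $wx$. The only other care needed is the disjointness of the two families of removed pairs, which is what guarantees we subtract $m+n-2$ rather than inadvertently double-removing the surviving pair $(K,L)$.
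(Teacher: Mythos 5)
Your proof is correct and follows the route the paper intends: the corollary is placed as an application of Theorem~\ref{thm:urbool}, and your structural lemma (that $K_\eps$ and $L_\eps$ are uniquely reachable for non-empty finite or suffix-free languages) together with the pair count is exactly the $r=1$, $\alpha=m+n-2$ instance of that theorem. The paper omits the details for space, but your argument supplies them correctly, including the disjointness check on the two families of removed pairs.
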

The bound $mn -(m+n-2)$ for union of suffix-free languages was shown to be tight for quinary languages by Han and Salomaa~\cite{HaSa09}. It is also tight for the binary languages $K=a((ba^*)^{m-3}b)^*(ba^*)^{m-3}$ and $L=a((a\cup b)^{n-3}b)^*(a\cup b)^{n-3}$, as shown recently by Jir\'askov\'a and Olej\'ar~\cite{JiOl09}.
\begin{figure}[ht]

\centerline{\setlength{\unitlength}{0.00052493in}
\begingroup\makeatletter\ifx\SetFigFont\undefined%
\gdef\SetFigFont#1#2#3#4#5{%
  \reset@font\fontsize{#1}{#2pt}%
  \fontfamily{#3}\fontseries{#4}\fontshape{#5}%
  \selectfont}%
\fi\endgroup%
{\renewcommand{\dashlinestretch}{30}
\begin{picture}(9443,3087)(0,-10)
\put(4893.929,2049.000){\arc{517.255}{3.8642}{9.0320}}
\path(4781.175,2313.331)(4700.000,2220.000)(4815.545,2264.151)
\path(5907,1597)(6267,1597)
\path(6147.000,1567.000)(6267.000,1597.000)(6147.000,1627.000)
\put(2585,105){\makebox(0,0)[lb]{\smash{{\SetFigFont{12}{14.4}{\rmdefault}{\mddefault}{\updefault}(a)}}}}
\path(12,1597)(372,1597)
\path(252.000,1567.000)(372.000,1597.000)(252.000,1627.000)
\put(7871.000,2542.000){\arc{517.386}{2.2933}{7.4613}}
\path(7606.601,2429.096)(7700.000,2348.000)(7655.752,2463.508)
\put(9176.000,2527.000){\arc{517.386}{2.2933}{7.4613}}
\path(8911.601,2414.096)(9005.000,2333.000)(8960.752,2448.508)
\put(4894.000,969.000){\arc{517.386}{3.8641}{9.0321}}
\path(4781.160,1233.344)(4700.000,1140.000)(4815.538,1184.169)
\put(7842,113){\makebox(0,0)[lb]{\smash{{\SetFigFont{12}{14.4}{\rmdefault}{\mddefault}{\updefault}(b)}}}}
\put(6501,1580){\ellipse{472}{472}}
\put(7815,1017){\ellipse{472}{472}}
\put(9110,2116){\ellipse{472}{472}}
\put(9127,1010){\ellipse{472}{472}}
\put(7812,1019){\ellipse{406}{406}}
\put(9114,2114){\ellipse{406}{406}}
\put(1910,2130){\ellipse{406}{406}}
\put(3215,2116){\ellipse{472}{472}}
\put(3215,1574){\ellipse{472}{472}}
\path(3453,1619)(4368,1904)
\blacken\path(4262.351,1839.671)(4368.000,1904.000)(4244.508,1896.957)(4262.351,1839.671)
\path(2135,960)(4250,960)
\blacken\path(4130.000,930.000)(4250.000,960.000)(4130.000,990.000)(4130.000,930.000)
\path(3463,2130)(4280,2130)
\blacken\path(4160.000,2100.000)(4280.000,2130.000)(4160.000,2160.000)(4160.000,2100.000)
\path(800,1425)(1692,1088)
\blacken\path(1569.142,1102.347)(1692.000,1088.000)(1590.347,1158.474)(1569.142,1102.347)
\path(2110,1987)(3002,1650)
\blacken\path(2879.142,1664.347)(3002.000,1650.000)(2900.347,1720.474)(2879.142,1664.347)
\path(2158,2130)(2975,2130)
\blacken\path(2855.000,2100.000)(2975.000,2130.000)(2855.000,2160.000)(2855.000,2100.000)
\path(6695,1425)(7587,1088)
\blacken\path(7464.142,1102.347)(7587.000,1088.000)(7485.347,1158.474)(7464.142,1102.347)
\path(8030,2220)(8885,2220)
\blacken\path(8765.000,2190.000)(8885.000,2220.000)(8765.000,2250.000)(8765.000,2190.000)
\path(8885,2040)(8030,2040)
\blacken\path(8150.000,2070.000)(8030.000,2040.000)(8150.000,2010.000)(8150.000,2070.000)
\path(8030,1155)(9005,1898)
\blacken\path(8927.738,1801.405)(9005.000,1898.000)(8891.371,1849.127)(8927.738,1801.405)
\path(8060,1013)(8893,1013)
\blacken\path(8773.000,983.000)(8893.000,1013.000)(8773.000,1043.000)(8773.000,983.000)
\path(9118,1260)(9118,1875)
\blacken\path(9148.000,1755.000)(9118.000,1875.000)(9088.000,1755.000)(9148.000,1755.000)
\path(6725,1680)(7580,2085)
\blacken\path(7484.394,2006.518)(7580.000,2085.000)(7458.709,2060.742)(7484.394,2006.518)
\path(830,1680)(1685,2085)
\blacken\path(1589.394,2006.518)(1685.000,2085.000)(1563.709,2060.742)(1589.394,2006.518)
\path(2150,1087)(2990,1485)
\blacken\path(2894.402,1406.508)(2990.000,1485.000)(2868.711,1460.729)(2894.402,1406.508)
\path(3433,1470)(4288,1132)
\blacken\path(4165.375,1148.217)(4288.000,1132.000)(4187.433,1204.016)(4165.375,1148.217)
\put(5240,1995){\makebox(0,0)[lb]{\smash{{\SetFigFont{10}{12.0}{\rmdefault}{\mddefault}{\updefault}$a,b$}}}}
\put(5240,915){\makebox(0,0)[lb]{\smash{{\SetFigFont{10}{12.0}{\rmdefault}{\mddefault}{\updefault}$a,b$}}}}
\put(3080,735){\makebox(0,0)[lb]{\smash{{\SetFigFont{10}{12.0}{\rmdefault}{\mddefault}{\updefault}$b$}}}}
\put(2270,1635){\makebox(0,0)[lb]{\smash{{\SetFigFont{10}{12.0}{\rmdefault}{\mddefault}{\updefault}$b$}}}}
\put(6411,1507){\makebox(0,0)[lb]{\smash{{\SetFigFont{10}{12.0}{\rmdefault}{\mddefault}{\updefault}$1$}}}}
\put(6860,1950){\makebox(0,0)[lb]{\smash{{\SetFigFont{10}{12.0}{\rmdefault}{\mddefault}{\updefault}$a$}}}}
\put(6815,1050){\makebox(0,0)[lb]{\smash{{\SetFigFont{10}{12.0}{\rmdefault}{\mddefault}{\updefault}$b$}}}}
\put(8307,2280){\makebox(0,0)[lb]{\smash{{\SetFigFont{10}{12.0}{\rmdefault}{\mddefault}{\updefault}$a$}}}}
\put(8322,1822){\makebox(0,0)[lb]{\smash{{\SetFigFont{10}{12.0}{\rmdefault}{\mddefault}{\updefault}$a$}}}}
\put(8067,1447){\makebox(0,0)[lb]{\smash{{\SetFigFont{10}{12.0}{\rmdefault}{\mddefault}{\updefault}$b$}}}}
\put(9170,1469){\makebox(0,0)[lb]{\smash{{\SetFigFont{10}{12.0}{\rmdefault}{\mddefault}{\updefault}$a,b$}}}}
\put(8277,810){\makebox(0,0)[lb]{\smash{{\SetFigFont{10}{12.0}{\rmdefault}{\mddefault}{\updefault}$a$}}}}
\put(3627,2220){\makebox(0,0)[lb]{\smash{{\SetFigFont{10}{12.0}{\rmdefault}{\mddefault}{\updefault}$a,b$}}}}
\put(2292,1305){\makebox(0,0)[lb]{\smash{{\SetFigFont{10}{12.0}{\rmdefault}{\mddefault}{\updefault}$a$}}}}
\put(3957,1298){\makebox(0,0)[lb]{\smash{{\SetFigFont{10}{12.0}{\rmdefault}{\mddefault}{\updefault}$b$}}}}
\put(3957,1560){\makebox(0,0)[lb]{\smash{{\SetFigFont{10}{12.0}{\rmdefault}{\mddefault}{\updefault}$a$}}}}
\put(7738,921){\makebox(0,0)[lb]{\smash{{\SetFigFont{10}{12.0}{\rmdefault}{\mddefault}{\updefault}$2$}}}}
\put(9027,2033){\makebox(0,0)[lb]{\smash{{\SetFigFont{10}{12.0}{\rmdefault}{\mddefault}{\updefault}$4$}}}}
\put(4392,930){\makebox(0,0)[lb]{\smash{{\SetFigFont{10}{12.0}{\rmdefault}{\mddefault}{\updefault}$6$}}}}
\put(4392,1987){\makebox(0,0)[lb]{\smash{{\SetFigFont{10}{12.0}{\rmdefault}{\mddefault}{\updefault}$7$}}}}
\put(1828,936){\makebox(0,0)[lb]{\smash{{\SetFigFont{10}{12.0}{\rmdefault}{\mddefault}{\updefault}$2$}}}}
\put(1821,2040){\makebox(0,0)[lb]{\smash{{\SetFigFont{10}{12.0}{\rmdefault}{\mddefault}{\updefault}$3$}}}}
\put(509,1492){\makebox(0,0)[lb]{\smash{{\SetFigFont{10}{12.0}{\rmdefault}{\mddefault}{\updefault}$1$}}}}
\put(3132,2025){\makebox(0,0)[lb]{\smash{{\SetFigFont{10}{12.0}{\rmdefault}{\mddefault}{\updefault}$4$}}}}
\put(3125,1478){\makebox(0,0)[lb]{\smash{{\SetFigFont{10}{12.0}{\rmdefault}{\mddefault}{\updefault}$5$}}}}
\put(7723,2040){\makebox(0,0)[lb]{\smash{{\SetFigFont{10}{12.0}{\rmdefault}{\mddefault}{\updefault}$3$}}}}
\put(9049,931){\makebox(0,0)[lb]{\smash{{\SetFigFont{10}{12.0}{\rmdefault}{\mddefault}{\updefault}$5$}}}}
\put(2457,2197){\makebox(0,0)[lb]{\smash{{\SetFigFont{10}{12.0}{\rmdefault}{\mddefault}{\updefault}$a$}}}}
\put(1033,1957){\makebox(0,0)[lb]{\smash{{\SetFigFont{10}{12.0}{\rmdefault}{\mddefault}{\updefault}$a$}}}}
\put(1025,1019){\makebox(0,0)[lb]{\smash{{\SetFigFont{10}{12.0}{\rmdefault}{\mddefault}{\updefault}$b$}}}}
\put(7767,2865){\makebox(0,0)[lb]{\smash{{\SetFigFont{10}{12.0}{\rmdefault}{\mddefault}{\updefault}$b$}}}}
\put(9080,2857){\makebox(0,0)[lb]{\smash{{\SetFigFont{10}{12.0}{\rmdefault}{\mddefault}{\updefault}$b$}}}}
\put(4487,1017){\ellipse{472}{472}}
\put(1910,2130){\ellipse{472}{472}}
\put(606,1580){\ellipse{472}{472}}
\put(1920,1017){\ellipse{472}{472}}
\put(4486,2095){\ellipse{472}{472}}
\put(4483,2092){\ellipse{406}{406}}
\put(7805,2130){\ellipse{472}{472}}
\end{picture}
}}

\caption{Illustrating unique reachability.} \label{fig:ur}
\end{figure}

\begin{example}
The automaton of Fig.~\ref{fig:ur}~(a) accepting $K$ has $m=7$ and four \ur{} states: 1, 2, 3, and 4. 
The automaton of Fig.~\ref{fig:ur}~(b) accepting~$L$ has $n=5$ and three \ur{} states: 1, 2, and 5. In pairs $(1,1)$ and~$(2,2)$  both states are reachable by the same word ($\eps$ and $b$, respectively); hence $r=2$.

The $m\times n=7\times 5$ table of all pairs is shown below, where \ur{} states are in boldface type. 
We have $\alpha=18$, where the removed pairs are all the pairs in the first two rows and columns, except $(1,1)$ and $(2,2)$. 
Next, $\beta=4$, and we remove the pairs $(3,4)$, $(3,5)$, $(4,3)$ and $(4,5)$ from rows 3 and 4.
Finally, $\gamma=2$, and we remove the pairs $(6,5)$ and $(7,5)$ from column 5.

{\small
\[
\begin{array}{lllll}
({\bf 1},{\bf 1}) & (1,2) & (1,3) & (1,4) & (1,5)\\
(2,1) & ({\bf 2},{\bf 2}) & (2,3) & (2,4) & (2,5)\\
(3,1) & (3,2) & ({\bf 3},3) & (3,4) & (3,5)\\
(4,1) & (4,2) & (4,3) & ({\bf 4},4) & (4,5)\\
(5,1) & (5,2) & (5,3) & (5,4) & (5,{\bf 5})\\
(6,1) & (6,2) & (6,3) & (6,4) & (6,5)\\
(7,1) & (7,2) & (7,3) & (7,4) & (7,5)
\end{array}
\]
}
Altogether, we have removed 24 states from $K\circ L$, leaving 11 possibilities. The minimal automaton of $K\cup L$ has 8 states.
Notice that state 7 corresponds to the quotient $\Sig^*$. Since $\Sig^*\cup L_w=\Sig^*$ for all $w$, we need to account for only one pair $(7,x)$, and we could remove the remaining four pairs. However, we have already removed pair $(7,5)$ by Theorem~\ref{thm:urbool}.
Hence, there are only three pairs left to remove, and we have an automaton with 8 states. More will be said about the effects of $\Sig^*$ later.

It is also possible to use Theorem~\ref{thm:urbool} if $K$ has some \ur{} quotients and $L$ has none, or when $L$ is completely unknown. If $n_u=0$, then $r=0$, $\alpha=0$, $\beta=m_u(n-1)$, and $\gamma=0$. Then, for any~$L$, 
\begin{equation}
\kappa(K\circ L)\le mn-m_u(n-1).
\end{equation}
For example, for any $L$ with $n=101$  and $K$ as in Fig.~\ref{fig:ur}~(a), $\kappa(K\cap L)\le 307$, instead of the general bound 707. 

Let $K$ and $L$ be the automata of Fig.~\ref{fig:ur}~(a) and~(b), respectively. Then the 
general bound on $\kappa(KL)$ is 192. Here $s=3$ (states 1, 2, and 4), and $t=1$ (state~3).
By Theorem~\ref{thm:urbool} the bound is reduced by $93+15=108$ to 84. The actual 
quotient complexity of $KL$ is 14.

The general bound for $LK$ is 512, the reduced bound is 195, and the actual quotient complexity\linebreak is~12. \hfill\qedb
\end{example}

\section[Languages with special quotients]{Languages with $\eps$, $\Sigma^+$, $\emp$,  or $\Sigma^*$ as quotients}
\label{sec:special}
In this section we consider the effects of the presence of special quotients in a language. In particular, we study the quotients $\eps$, $\Sigma^+$, $\emp$,  and $\Sigma^*$.

\begin{theorem}
\label{thm:bool}
If $\kappa(K)=m$, $\kappa(L)=n$,  and $K$ and $L$ have $k>0$ and $l>0$ accepting quotients, respectively, then
\be
\item
{\it If $K$ and $L$ have $\eps$ as a quotient, then }
	\bi
	\item
	$\kappa(K\cup L)\le mn-2$. 
	\item
	$\kappa(K\cap L)\le mn-(2m+2n-6)$.
	\item
	$\kappa(K\setminus L)\le mn-(m+2n-k-3)$.
	\item
	$\kappa(K\oplus L) \le mn-2$. 
	 \ei
 \item{\it
If $K$ and $L$ have $\Sigma^+$ as a quotient, then }
	\bi
	\item
	$\kappa(K\cap L)\le mn-2$.
	\item
	$\kappa(K\cup L)\le  mn-(2m+2n-6)$.
	\item
	$\kappa(K\setminus L)\le mn-(2m+l-3)$.
	\item
	$\kappa(K\oplus L)\le mn-2$.
	\ei
\item{\it 
If $K$ and $L$ have $\emp$ as a quotient, then}
\bi
	\item
	$\kappa(K\cap L)\le mn-(m+n-2)$.
	\item
	$\kappa(K\setminus L)\le mn-n+1$.
	\ei
\item{\it
If $K$ and $L$ have $\Sigma^*$ as a quotient, then }
\bi
	\item
	$\kappa(K\cup L)\le mn-(m+n-2)$.
	\item
	$\kappa(K\setminus L)\le mn-m+1$.
	\ei
 \item{\it
	\bi
	\item
	If $L$ has $\eps$ as a quotient, then $\kappa(L^R)\le 2^{n-2}+1$.
	\item
	If $L$ has $\Sig^+$ as a quotient, then $\kappa(L^R)\le 2^{n-2}+1$.
	\item
	If $L$ has $\emp$ as a quotient, then $\kappa(L^R)\le 2^{n-1}$.
	\item
	If $L$ has $\Sig^*$ as a quotient, then $\kappa(L^R)\le 2^{n-1}$.
	  \item
	  Moreover,  the effect of these quotients on complexity is cumulative. For example, if $L^R$ has both $\emp$ and $\Sig^*$, then $\kappa(L^R)\le 2^{n-2}$, if $L^R$ has both $\emp$ and $\Sig^+$, then $\kappa(L^R)\le 2^{n-3}+1$, \etc
	\ei}
\ee
\end{theorem}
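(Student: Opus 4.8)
The plan is to handle the boolean parts (1--4) and the reversal part (5) with two different devices. For parts 1--4 the starting point is formula~(\ref{eq:bool}), $(K\circ L)_w=K_w\circ L_w$, which says that every quotient of $K\circ L$ arises by applying $\circ$ to a pair $(P,Q)$ with $P$ a quotient of $K$ and $Q$ a quotient of $L$. This is exactly the grid of $mn$ pairs behind the bound $\kappa(K\circ L)\le mn$ of Theorem~\ref{thm:basic}, and to improve it I would, in each case, exhibit the pairs that are \emph{forced} to yield coinciding quotients and subtract the number of collapses. Two structural remarks organize everything. First, the hypotheses are linked: if $\eps$ is a quotient of a language then so is $\emp$ (since the derivative of $\eps$ is $\emp$), and similarly $\Sigma^+$ as a quotient forces $\Sigma^*$. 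Second, by~(\ref{eq:comp}) complementation sends a quotient $P$ to $\ol{P}$ and preserves quotient complexity, with $\ol{\{\eps\}}=\Sig^+$ and $\ol{\emp}=\Sig^*$; combined with the De~Morgan laws this collapses the eight identities of parts~1--2 (and the four of parts~3--4) onto one another, so I would compute only one representative per duality class and read off the rest.

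The clean cases are those in which one value absorbs a whole row and column. For $\emp$ as a common quotient, $\emp\cap Q=\emp$ for every $Q$ and $P\cap\emp=\emp$ for every $P$, so $m+n-1$ pairs collapse to the single quotient $\emp$, removing $m+n-2$ and giving $\kappa(K\cap L)\le mn-(m+n-2)$; the $\Sig^*$ union bound is its dual, and the set-difference bounds follow by the same absorption ($\emp\setminus Q=\emp$, resp.\ $P\setminus\Sig^*=\emp$). For $\eps$ with union and, dually, $\Sig^+$ with intersection, the three pairs drawn from $\{\{\eps\},\emp\}\times\{\{\eps\},\emp\}$ that contain at least one $\{\eps\}$ all give $\{\eps\}$, yielding two collapses and $\kappa(K\cup L)\le mn-2$; symmetric difference is the same, with $(\{\eps\},\{\eps\})$ and $(\emp,\emp)$ both giving $\emp$ while $(\{\eps\},\emp)$ and $(\emp,\{\eps\})$ both give $\{\eps\}$.

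The harder cases are the intersection bound with $\eps$ (the $2m+2n-6$ reduction) and, above all, the set-difference bounds, whose coefficients involve the numbers $k,l$ of accepting quotients. A plain row/column count overcounts here, because $\{\eps\}\cap Q$ is $\{\eps\}$ or $\emp$ according as $Q$ is accepting or rejecting, and because set difference is asymmetric. I would therefore run an inclusion--exclusion that separates accepting from rejecting quotients and avoids counting a corner pair such as $(\{\eps\},\emp)$ twice; this bookkeeping, together with pinning down exactly where $k$ (resp.\ $l$) enters, is the main obstacle. The set-difference identity of part~2 I would not reprove: the substitution $(K,L)\mapsto(\ol{L},\ol{K})$ turns ``both have $\eps$'' into ``both have $\Sig^+$'' while exchanging $m,n$ and $k,l$, matching $mn-(m+2n-k-3)$ with $mn-(2m+l-3)$.

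For part~5 I would pass to automata. Let $\cA$ be the minimal DFA of $L$, whose $n$ states are the quotients of $L$; the reverse NFA $\cA^R$ (edges reversed, initial and final sets swapped), determinized by the subset construction, recognizes $L^R$, so $\kappa(L^R)$ is at most the number of distinct reachable subsets. A direct check shows that the subset reached on input $x$ is $S_x=\{\,L_v : x^R\in L_v\,\}$, the set of quotients of $L$ that contain $x^R$. Each special quotient then pins one coordinate of $S_x$: the quotient $\emp$ lies in no $S_x$; the quotient $\Sig^*$ lies in every $S_x$; the quotient $\{\eps\}$ lies in $S_x$ only for $x=\eps$; and $\Sig^+$ lies in $S_x$ for every $x\neq\eps$ (and each of $\{\eps\}$, $\Sig^+$ drags in its companion sink $\emp$, resp.\ $\Sig^*$). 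Counting subsets under these constraints gives $\kappa(L^R)\le 2^{n-1}$ for $\emp$ or $\Sig^*$ and $\kappa(L^R)\le 2^{n-2}+1$ for $\{\eps\}$ or $\Sig^+$, the additive~$1$ being the lone exceptional subset $S_\eps=F$. Because distinct special quotients fix distinct coordinates the constraints multiply, which is exactly the claimed cumulative effect; the only delicate point is the isolated word $x=\eps$, which accounts for every ``$+1$''.
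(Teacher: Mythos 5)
The paper itself omits the proof of this theorem (``several proofs are omitted because of space limitations''), so there is no line-by-line comparison to make; judged on its own, your plan is the right one and almost certainly the intended one: use $(K\circ L)_w=K_w\circ L_w$ to view the quotients of $K\circ L$ as values on the $m\times n$ grid of pairs, count forced collapses, and handle reversal via the reachable subsets of the determinized reversed automaton. Your preliminary observations are sound ($\eps$ as a quotient forces $\emp$, $\Sig^+$ forces $\Sig^*$, and complementation with De~Morgan reduces part~2 to part~1 and the $\Sig^*$ cases to the $\emp$ cases), your easy cases (union and symmetric difference with $\eps$, the absorption arguments for parts~3 and~4) are correct, and your part~5 is complete and correct, including the ``$+1$'' for the exceptional subset $S_\eps=F$ and the multiplicativity of the constraints.

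The gap is that you explicitly stop short of the two counts that carry the nontrivial coefficients, calling them ``the main obstacle''; they must actually be carried out, and here is how they close. For $K\cap L$ with $\eps$ a common quotient: the two rows indexed by the $K$-quotients $\eps$ and $\emp$ and the two columns indexed by the $L$-quotients $\eps$ and $\emp$ cover $2m+2n-4$ cells, and every such cell has value $\eps$ or $\emp$ (a row or column indexed by $\emp$ is constantly $\emp$, while $\eps\cap Q$ is $\eps$ or $\emp$ according as $Q$ accepts or rejects $\eps$), so $2m+2n-4$ cells yield at most $2$ quotients, a saving of $2m+2n-6$. For $K\setminus L$: the row indexed by $\emp$ contributes only $\emp$ ($n$ cells, one value); the row indexed by $\eps$ contributes at most one new value, namely $\eps$; the column indexed by $\emp$ has entries $K_w\setminus\emp=K_w$, so its $m-2$ cells outside the two special rows contribute at most $m-2$ new values; and the column indexed by $\eps$ has entries $K_w\setminus\{\eps\}$, which for rejecting $K_w$ coincide with the corresponding entry of the $\emp$-column and hence contribute at most $k-1$ new values (one for each accepting $K_w\neq\eps$). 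Thus $2m+2n-4$ cells yield at most $1+1+(m-2)+(k-1)=m+k-1$ values, a saving of $m+2n-k-3$ as claimed; your substitution $(K,L)\mapsto(\ol{L},\ol{K})$ then correctly transports this to the $\Sig^+$ bound $mn-(2m+l-3)$. With these two counts supplied, the argument is complete.
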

\goodbreak

\begin{corollary}
If $K$ and $L$ are both non-empty and both suffix-free with $\kappa(K)=m$ and $\kappa(L)=n$, then $\kappa(K\cap L)\leq mn-2(m+n-3)$.
\end{corollary}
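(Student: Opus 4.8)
The plan is to name every quotient of $K\cap L$ by a pair of quotients via~(\ref{eq:bool}), writing $(K\cap L)_w=K_w\cap L_w$, so that each quotient corresponds to a \emph{reachable} pair $(K_w,L_w)$ in the $m\times n$ grid of quotient pairs; the bound then reduces to counting how many pairs can yield distinct quotients. Two structural properties of a non-empty \emph{suffix-free} language drive the savings. First, such a language is \emph{non-returning}: if $L_w=L$ for some $w\neq\eps$, pick any $z\in L$ (possible since $L\neq\emp$); then $z\in L=L_w$ gives $wz\in L$, and $z$ is a proper suffix of $wz$ with both in $L$, contradicting suffix-freeness. Hence $L_w=L$ forces $w=\eps$, and likewise for $K$; in the terminology of Section~\ref{sec:non-ret}, $K$ and $L$ are uniquely reachable by $\eps$.

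Second, I would show that a non-empty suffix-free regular language has $\emp$ among its quotients, and I expect this to be the main obstacle: suffix-freeness constrains the \emph{front} of words, whereas $w^{-1}L=\emp$ asserts that $w$ is a prefix of no word of $L$, and the two conditions are not directly comparable. The clean route is to pass to the reversal, using that $L$ is suffix-free iff $L^R$ is prefix-free, and that ``$\emp$ is a quotient of $L$'' is equivalent to ``$L^R$ is not suffix-dense'' (every word a suffix of some $L^R$-word). In the minimal automaton of the prefix-free language $L^R$ the only accepting quotient is $\{\eps\}$, since no accepted word extends to another; thus there is a single final state that sends every letter to the absorbing empty quotient. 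Tracing the orbit of any fixed letter $a$ from any state, that orbit can meet the final state at most once, so only finitely many lengths $N$ admit $a^N$ as a suffix of an $L^R$-word, and $L^R$ cannot be suffix-dense. Consequently $\emp$ is a quotient of both $K$ and $L$. (Alternatively one may cite the known structure of suffix-free minimal automata.)

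With both facts established the count is short. For intersection, every pair having an $\emp$ coordinate gives $K_w\cap L_w=\emp$, so all such pairs collapse to the single quotient $\emp$, contributing $1$. The remaining quotients come from pairs $(X,Y)$ with $X$ among the $m-1$ non-empty quotients of $K$ and $Y$ among the $n-1$ non-empty quotients of $L$, a $(m-1)(n-1)$ subgrid. Non-returning removes from this subgrid every pair $(K,Y)$ with $Y\neq L$ and every pair $(X,L)$ with $X\neq K$, because the coordinate value $K$ (respectively $L$) is reachable only by $\eps$; these are $n-2$ and $m-2$ pairs, and the two sets are disjoint. Hence
\begin{equation}
\kappa(K\cap L)\;\le\;1+(m-1)(n-1)-\big((n-2)+(m-2)\big)\;=\;mn-2(m+n-3).
\end{equation}

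Finally I would note that a non-empty suffix-free language cannot have $\kappa=1$, since $\kappa(L)=1$ forces $L=\emp$ or $L=\Sig^*$, neither of which is non-empty and suffix-free; thus $m,n\ge 2$, the removed counts $m-2$ and $n-2$ are non-negative, and every subtraction above is legitimate. The only genuinely delicate point remains the presence of the empty quotient, for which the reversal argument of the second paragraph is the crux.
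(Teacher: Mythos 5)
Your proof is correct and follows essentially the route the paper intends (its own proof is omitted): the bound is exactly the combination of the unique-reachability saving of $m+n-2$ pairs underlying Corollary~\ref{cor:suffree1} with the $\emp$-quotient saving of $m+n-2$ pairs from Theorem~\ref{thm:bool}(3), minus the overlap of the two pairs $(K,\emp)$ and $(\emp,L)$, which is precisely your count $1+(m-1)(n-1)-(m-2)-(n-2)=mn-2(m+n-3)$. Your explicit verifications that a non-empty suffix-free language is uniquely reachable by $\eps$ and has $\emp$ as a quotient (via the reversal/prefix-free argument) are sound and fill in facts the paper takes for granted.
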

It is shown in~\cite{HaSa09} that the bound can be reached with 
$$K=\{\#w \mid w \in \{a,b\}^*,  |w|_a \equiv 0 \mbox{ mod } m-2\},$$ 
$$L=\{\#w \mid w \in \{a,b\}^*,  |w|_b \equiv 0 \mbox{ mod } n-2\}.$$
It was recently proved in~\cite{JiOl09} that this bound can be reached by the binary languages given after Corollary~\ref{cor:suffree1}.

\begin{proposition}
\label{prop:spstar}
If $\kappa(L)=n\ge 3$, $L$ has $l>0$ accepting quotients,  and 
 $L$ has $\eps$ as a quotient,   then $\kappa(L^*)\le 2^{n-3}+2^{n-l-1}+1$.
\end{proposition}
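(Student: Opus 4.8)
The plan is to follow the counting in the proof of Theorem~\ref{thm:basic}(4), but to exploit two collapses forced by the presence of the quotient $\eps$. By~(\ref{eq:star}), every quotient of $L^*$ by a word $w\in\Sig^+$ has the form $(\bigcup_{q\in S}q)L^*$, where $S=\{L_v\mid w=uv,\ u\in L^*\}$ is a set of quotients of $L$; this set always contains $L_w$, and, since $(L^*)_w$ is accepting exactly when $w\in L^*$, it contains $L=L_\eps$ whenever $(L^*)_w$ is accepting. First I would record that if $\eps$ is a quotient of $L$ then so is $\emp$: the derivative of the expression $\eps$ satisfies $\eps_a=\emp$ for every $a\in\Sig$, so the empty quotient arises as a derivative of $\eps$. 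Hence $L$ has three distinct distinguished quotients, namely $\eps$ (accepting), $\emp$ (rejecting) and $L$ itself, where $L\ne\eps,\emp$ because $n\ge 3$.

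The first collapse comes from $\emp$: since $\emp L^*=\emp$, adjoining or deleting $\emp$ from $S$ never changes $(\bigcup_{q\in S}q)L^*$, so throughout the count I may assume $\emp\notin S$, which deletes one rejecting quotient from every free choice. The second, decisive, collapse comes from $\eps$. Using $\eps L^*=L^*$ one has $(\bigcup_{q\in S}q)L^*=L^*\cup(\bigcup_{q\in S\setminus\{\eps\}}q)L^*$; and if $S\setminus\{\eps\}$ still contains an accepting quotient $a$, then $\eps\in a\subseteq\bigcup_{q\in S\setminus\{\eps\}}q$, whence $L^*=\eps L^*\subseteq(\bigcup_{q\in S\setminus\{\eps\}}q)L^*$ and the $\eps$ term is redundant. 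Thus $\eps$ changes the quotient only when it is the sole accepting quotient in $S$.

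With these reductions the count splits as before. A rejecting quotient of $L^*$ is a union of non-$\emp$ rejecting quotients of $L$, of which there are $n-l-1$, contributing at most $2^{n-l-1}$ languages. For an accepting quotient, $L\in S$ is forced and $\emp\notin S$ may be assumed; I would then split on $L$. If $L$ is rejecting, the $\eps$-collapse leaves $2^{l-1}$ admissible accepting parts (the $2^{l-1}-1$ non-empty subsets of the $l-1$ accepting quotients other than $\eps$, together with the one signature in which $\eps$ occurs alone) and $2^{n-l-2}$ choices for the remaining rejecting quotients, namely $2^{l-1}2^{n-l-2}=2^{n-3}$ languages. If $L$ is accepting, then $L\ne\eps$ is an accepting quotient always present, so $\eps$ is always redundant, leaving $2^{l-2}$ accepting parts and $2^{n-l-1}$ rejecting parts, again $2^{n-3}$. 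Adjoining the single quotient $(L^*)_\eps=L^*$ gives $\kappa(L^*)\le 2^{n-3}+2^{n-l-1}+1$.

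I expect the main obstacle to be the bookkeeping around the forced quotient $L$ and the two sub-cases for it: one must verify that in each sub-case the distinguished quotients $\eps,\emp,L$ are genuinely distinct and are removed from exactly the right free choices, and that the $\eps$-collapse is applied to precisely those subsets retaining a second accepting quotient. Checking that the accepting quotients of $L^*$ are not counted twice, and that $(L^*)_\eps=L^*$ is safely absorbed into the additive $1$ rather than colliding with the accepting count, is the delicate point; the supporting identities $\emp L^*=\emp$, $\eps L^*=L^*$ and $L^*=\eps\cup LL^*$ are then routine.
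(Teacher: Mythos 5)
Your proof is correct and follows essentially the same route as the paper: establish that $\emp$ is also a quotient, use the $\emp L^*=\emp$ and $\eps L^*=L^*$ collapses, split on whether $L$ itself is accepting, and count $2^{n-l-1}$ rejecting plus $2^{n-3}$ accepting quotients plus one for $(L^*)_\eps$. The only cosmetic difference is that you add the $+1$ uniformly in both cases (the paper notes it is unnecessary when $L$ is accepting), which only loosens that sub-case and does not affect the stated bound.
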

\begin{proof}
If $L$ has $\eps$, then it also has $\emp$.
From~(\ref{eq:star}), every  quotient of $L^*$ by a non-empty word is a union 
of a non-empty subset of quotients of $L$, followed by~$L^*$. 
We have two cases:
\be
\item
Suppose $L$ is rejecting.
	\be
	\item
	If no accepting quotient is included, then there are $2^{n-l-1}-1$ non-empty subsets of non-empty rejecting quotients plus the subset consisting of the empty quotient alone, for a total of $2^{n-l-1}$.
	\item
  	If an accepting quotient is included in the subset, then so is $L$. We can add the subset $\{\eps\}$ or any non-empty subset $S$ of accepting quotients that does not contain $\eps$, since $S\cup \{\eps\}$ is equivalent to $S$.
Thus we have $2^{l-1}$ subsets of accepting quotients.  
To this we can add $2^{n-l-2}$ rejecting subsets, since the  empty quotient and $L$ need not be counted. The total is $2^{l-1}2^{n-l-2}=2^{n-3}$.
	\ee
Adding 1 for $(L^*)_{\eps}$, we have a total of 
$2^{n-3}+2^{n-l-1}+1$.
\item
Suppose $L$ is accepting. Since $n\ge 3$, we have $L\not= \eps$.
	\be
	\item
	If there is no accepting quotient, there are $2^{n-l-1}$ subsets, as before.
	\item If an accepting quotient is included, then $L$ is included and $L$ itself is sufficient to 	guarantee that $(L^*)_w$ is accepting. Since $L\cup \eps=L\cup \emp=L$, we also exclude $\eps$ and $\emp$. Thus any one of  the 
	$2^{n-3}$ 	subsets of the remaining quotients  can be added to $L$.
	\ee
The total is $2^{n-3}+2^{n-l-1}$. We need not add $(L^*)_\eps$, since it is $LL^*$ which has been counted already.
\ee
The worst-case bound of $2^{n-3}+2^{n-l-1}+1$ occurs in the first case only.
\end{proof}

\section{Conclusions}
\label{sec:conc}
Quotients  provide a uniform approach for finding upper bounds for the complexity of operations on regular languages, and for verifying that particular languages meet these bounds. It is hoped that this is a step towards a theory of complexity of languages and automata.

\paragraph{Acknowledgements}
I  am very grateful to Galina Jir\'askov\'a for correcting several errors in early versions of this paper, suggesting better examples, improving proofs, and helping me with references, in particular, with the early work on complexity.
I thank  Sheng Yu for his help with references, and for answering many of my questions on complexity. I also thank Baiyu Li, Shengying Pan, and Jeff Shallit for their careful reading of the manuscript.

\bibliographystyle{eptcs}
\bibliography{brzozo}

\end{document}